\documentclass[sigplan,screen,nonacm]{acmart}

\usepackage{hyperref}
\usepackage{graphicx}
\usepackage{textcomp}
\usepackage{xcolor}
\usepackage{algorithm}
\usepackage{algorithmic}

\def\BibTeX{{\rm B\kern-.05em{\sc i\kern-.025em b}\kern-.08em
    T\kern-.1667em\lower.7ex\hbox{E}\kern-.125emX}}

\begin{document}

\title{Vectorization of Verilog Designs and its Effects on Verification and Synthesis}

\author[M. Guimar\~{a}es]{Maria Fernanda Oliveira Guimar\~{a}es}
\orcid{}
\affiliation{%
  \institution{UFMG}
  \city{Belo Horizonte}
  \country{Brazil}
}
\email{maria.guimaraes@dcc.ufmg.br}

\author[U. Rosa]{Ulisses Rosa}
\orcid{}
\affiliation{%
  \institution{UFMG}
  \city{Belo Horizonte}
  \country{Brazil}
}
\email{ulissesrosa@dcc.ufmg.br}

\author[I. Trudel]{Ian Trudel}
\orcid{}
\affiliation{%
  \institution{Independent Researcher}
  \city{Quebec}
  \country{Canada}
}
\email{ian.trudel@researchcenter.io}

\author[J. Vieira]{Jo\~{a}o Victor Amorim Vieira}
\orcid{}
\affiliation{%
  \institution{Cadence}
  \city{Belo Horizonte}
  \country{Brazil}
}
\email{jvamorim@cadence.com}

\author[A. Mafra]{Augusto Amaral Mafra}
\orcid{}
\affiliation{%
  \institution{Cadence}
  \city{Belo Horizonte}
  \country{Brazil}
}
\email{augusto@cadence.com}

\author[M. Crepalde]{Mirlaine Crepalde}
\orcid{}
\affiliation{%
  \institution{Cadence}
  \city{Belo Horizonte}
  \country{Brazil}
}
\email{mirlaine@cadence.com}

\author[F. Pereira]{Fernando Magno Quint\~{a}o Pereira}
\orcid{}
\affiliation{%
  \institution{UFMG}
  \city{Belo Horizonte}
  \country{Brazil}
}
\email{fernando@dcc.ufmg.br}

\begin{abstract}
Vectorization is a compiler optimization that replaces multiple operations on scalar values with a single operation on vector values.
Although common in traditional compilers such as rustc, clang, and gcc, vectorization is not common in the Verilog ecosystem.
This happens because, even though Verilog supports vector notation, the language provides no semantic guarantee that a vectorized signal behaves as a word-level entity: synthesis tools still resolve multiple individual assignments and a single vector assignment into the same set of parallel wire connections.
However, vectorization brings important benefits in other domains.
In particular, it reduces symbolic complexity even when the underlying hardware remains unchanged.
Formal verification tools such as Cadence\textsuperscript{\textregistered} Jasper\textsuperscript{\textregistered} operates at the symbolic level: they reason about Boolean functions, state transitions, and equivalence classes, rather than about individual wires or gates.
When these tools can treat a bus as a single symbolic entity, they scale more efficiently.
This paper supports this observation by introducing a Verilog vectorizer.
The vectorizer, built on top of the CIRCT compilation infrastructure, recognizes several vectorization patterns, including inverted assignments, assignments involving complex expressions, and inter-module assignments.
It has been experimented with some Electronic design automation (EDA) tools, and for Jasper tool, it improves elaboration time by 28.12\% and reduces memory consumption by 51.30\% on 1,157 designs from the ChiBench collection.
\end{abstract}

\begin{CCSXML}
<ccs2012>
<concept>
<concept_id>10011007.10011006.10011041</concept_id>
<concept_desc>Software and its engineering~Compilers</concept_desc>
<concept_significance>500</concept_significance>
</concept>
</ccs2012>
\end{CCSXML}

\ccsdesc[500]{Software and its engineering~Compilers}

\keywords{verilog, vectorization, verification, optimization}

\maketitle

\section{Introduction}
\label{sec_intro}

Vectorization is a classic compiler optimization that replaces scalar operations with equivalent operations over wider data types~\cite{Allen88}.
It is used in software compilers such as \texttt{gcc} and \texttt{clang} at the \texttt{-O2} and \texttt{-O3} optimization levels to improve runtime~\cite{Alves15}.
Whenever applicable, vectorization is very effective.
In the words of Maleki {\it et al.}~\cite{Maleki11}, ``{\it Vectorization is one of the compiler transformations that can have significant impact on performance.}''

Nevertheless, in spite of its importance in classic compilers, vectorization is absent in compilers and tools that process Verilog designs.
At the hardware level, vectors are ultimately realized as independent wires and bit-level logic, so synthesis tools lower vector signals into scalars early in the flow.
This practice has created a common assumption in the EDA community that recovering vector structure offers little benefit. Consequently, mainstream Verilog processing frameworks such as CIRCT (Circuit Intermediate Representation Compilers and Tools) and Yosys do not provide analyses or transformations that group related scalar signals back into structured vectors.

\paragraph{Vectorization of Hardware-Description Languages.}
This paper demonstrates that vectorization can be beneficial even when applied in the context of hardware-description languages.
Specifically, it provides evidence in support of the following thesis:
\begin{small}
\begin{quotation}
{\it ``Although the vectorization of Verilog designs does not change the hardware they describe, it reduces their symbolic complexity, enabling faster and more scalable analysis and verification.''}
\end{quotation}
\end{small}
To substantiate this claim, Section~\ref{sec_ovf} introduces three transformations tailored to Verilog designs, while Section~\ref{sec_impl} describes how these transformations are realized through static compiler analyses and source-to-source rewriting.

\paragraph{Spatial vs Temporal Vectorization.}
The form of vectorization proposed in this paper is different from classic vectorization.
In both settings, vectorization consists of identifying recurring computation patterns and collapsing them into a single, more compact representation.
Classic compiler vectorization, however, focuses on \emph{temporal recurrences}, that is, repeated executions of the same operations across iterations of a loop.
This view originates in the seminal work of Allen {\it et al.}~\cite{Allen88} and remains central to modern vectorizers~\cite{Alladi26}.

In contrast, the vectorization presented in this paper targets \emph{spatial recurrences} that arise in RTL designs.
As Section~\ref{sec_ovf} illustrates, Verilog descriptions often express word-level behavior through replicated bit-level assignments that differ only in constant indices.
Our approach detects groups of assignments that are structurally equivalent but operate on distinct signals, and rewrites them as single vector-level expressions.
Thus, while traditional vectorization is loop-centric, our technique identifies structurally replicated logic in the absence of explicit iteration constructs.
From this perspective, the proposed vectorizer is conceptually closer to techniques that recognize and outline recurrent code patterns~\cite{Ablego07,Campos23} than to classical loop-based vectorization.

\paragraph{Summary of Results.}
Section~\ref{sec_eval} describes a vectorizer implemented within the CIRCT infrastructure.
We have evaluated this implementation over a collection of 1,157 designs from the ChiBench~\cite{Sumitani24} collection. Section~\ref{sub_rq3} shows that vectorization improves Jasper's RTL (Register-Transfer Level) elaboration time by over 28\% and reduces its memory consumption by more than 50\%. Section~\ref{sub_rq4} shows that vectorization also reduces Cadence Genus's elaboration time by more than 5\% on average, with much larger improvements on individual benchmarks.
These results have motivated the CIRCT community to accept the proposed vectorizer as a standard optimization, which has been available in their open-source distribution since early 2026.

\section{Overview}
\label{sec_ovf}

Vectorization of Verilog designs is the process of detecting groups of bit-level operations that collectively implement word-level expressions and replacing them with equivalent vector operations.
The implementation discussed in this paper supports any combination of three classes of transformations, defined as follows:
\begin{description}
\item [General bit permutations:]
Arbitrary reorderings of signals are compacted into a single vector assignment.

\item [Intermodule vectorization:]
Structurally similar module instances are selectively inlined to expose vectorization opportunities across module boundaries.

\item [Complex patterns:]
Logic and arithmetic operations defined over individual bits are combined into a single vector-level expression.
\end{description}
The remainder of this section illustrates each of these patterns through a set of examples, starting with Example~\ref{ex_general_pattern}.

\begin{example}
\label{ex_general_pattern}
Figure~\ref{fig_general_pattern} shows the effect of vectorizing a sequence of bit assignments.
The vectorization of general assignments targets designs in which the bits of an output vector form an arbitrary reordering of the bits of a single input vector.
Although such designs are often written as multiple independent bit-level assignments, they collectively implement a one-to-one permutation.
In this example, if we let $\pi : \{0, 1, 2, 3\} \rightarrow \{ 3, 1, 2, 0 \}$ be a permutation, then the vectorizer will implement the assignment $\mathtt{out}[i] = \mathtt{in}[\pi(i)]$.
\end{example}

\begin{figure}[ht]
\includegraphics[width=1.0\columnwidth]{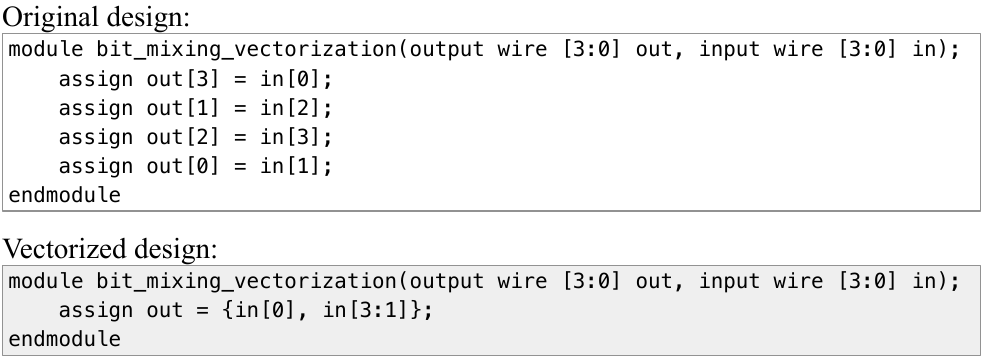}
\caption{General bit permutation.}
\Description{General bit permutation.}
\label{fig_general_pattern}
\end{figure}

As Section~\ref{sub_bitLevel} will explain, the vectorizer detects bit-permutation patterns by tracing the origin of each output bit and verifying that all bits originate from the same source vector, with no duplication.
Such permutations are compacted into single vector assignments expressed using concatenation and slicing.
This transformation preserves the original wiring semantics, but reduces symbolic complexity.
As Example~\ref{ex_intermodule_pattern} shows, the proposed vectorizer supports the identification of general permutations across module boundaries.

\begin{example}
\label{ex_intermodule_pattern}
Figure~\ref{fig_intermodule_pattern} illustrates intermodule vectorization.
This component of the vectorizer addresses cases in which vectorizable behavior is distributed across multiple module instances.
In these designs, structurally identical submodules, such as \texttt{my\_buf} in the example, are instantiated repeatedly, with each instance operating on a single bit of a vector.
\end{example}

\begin{figure}[ht]
\includegraphics[width=1.0\columnwidth]{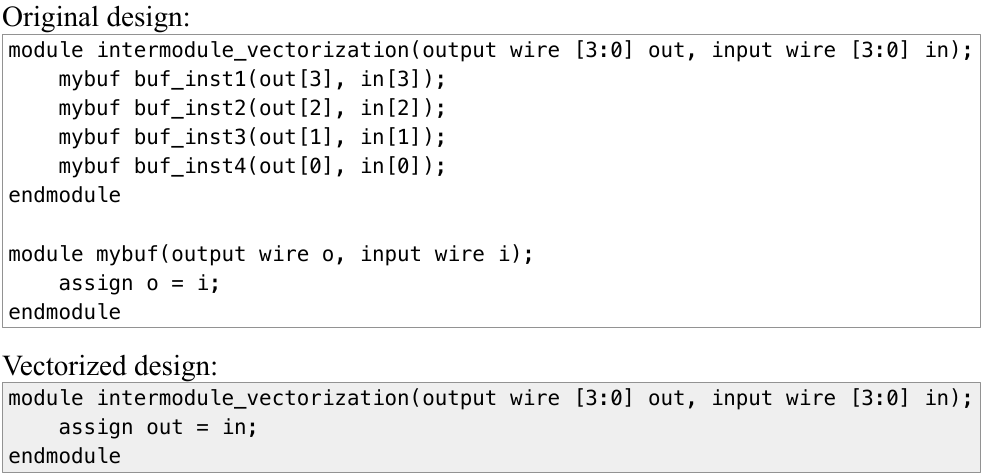}
\caption{Intermodule vectorization.}
\Description{Intermodule vectorization.}
\label{fig_intermodule_pattern}
\end{figure}

Section~\ref{sub_inlining} will show that the vectorizer applies selective inlining to expose redundant structures within module boundaries.
This transformation enables subsequent bit-level or structural analyses to recognize the underlying vector pattern.
Once inlined, the repeated scalar instances are replaced by a single vector-level operation, without altering module interfaces or the synthesized hardware behavior.
Example~\ref{ex_complex_pattern} shows that vectorization can be extended to assignments involving any number of logic or arithmetic operations.

\begin{example}
\label{ex_complex_pattern}
Figure~\ref{fig_complex_pattern} illustrates a case of complex-pattern vectorization.
This class captures replicated bitwise logic or arithmetic expressions that operate independently on each bit of one or more input vectors.
Typical examples include conditional selects, logical combinations, or arithmetic operations expressed as per-bit assignments.
\end{example}

\begin{figure}[ht]
\includegraphics[width=1.0\columnwidth]{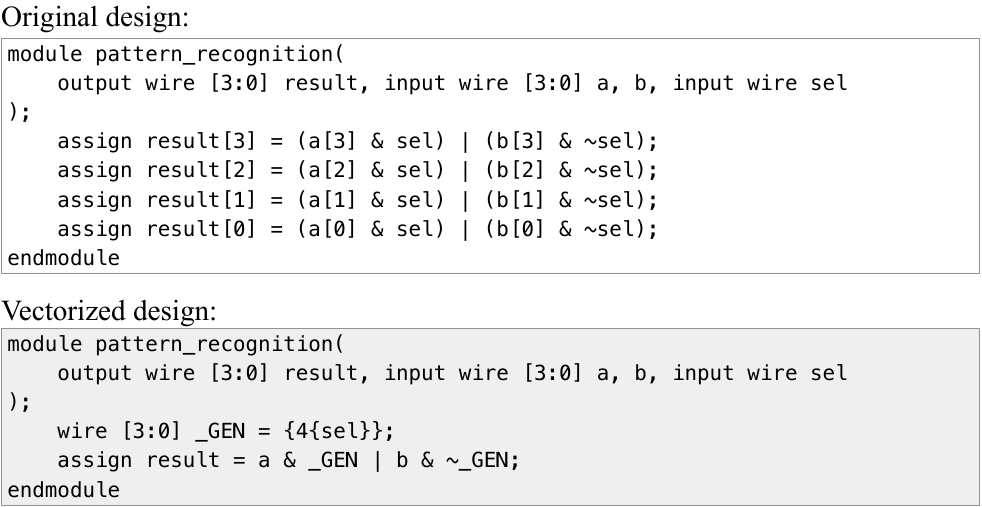}
\caption{Complex logic and arithmetic patterns.}
\Description{Complex logic and arithmetic patterns.}
\label{fig_complex_pattern}
\end{figure}

Section~\ref{sub_structural} explains how the complex patterns of Example~\ref{ex_complex_pattern} are recognized.
While these patterns cannot be identified as simple permutations, the vectorizer detects them by verifying that each output bit is computed by an independent and structurally equivalent logic cone.
When this condition holds, the repeated scalar expressions are merged into a single vector-level expression, yielding a more compact and symbolically efficient representation.

\paragraph{A Combined Pipeline.}
Examples~\ref{ex_general_pattern} through~\ref{ex_complex_pattern} highlight a key aspect of the proposed vectorization approach: it identifies vectorizable behavior in structurally recurrent code.
As discussed in Section~\ref{sec_intro}, this perspective contrasts with classic vectorization techniques, which primarily target patterns such as mappings and reductions expressed through loops.
This shift in perspective, from \emph{temporal} recurrences to \emph{spatial} recurrences, is necessary because standard Verilog lacks loop constructs with a clear execution semantics comparable to those of software languages.
Instead, word-level behavior is often expressed through replicated bit-level structures.
The proposed vectorizer is able to identify such recurrent structures in any combination of cases, including intermodule vectorization of permutations involving complex arithmetic and logical expressions.
Section~\ref{sec_impl} details how the patterns associated with each of these cases are detected and transformed into vectorized assignments.

\section{Implementation}
\label{sec_impl}

This paper's vectorizer is implemented on top of CIRCT and follows the flow seen in Figure~\ref{fig_overview}.
Notice that although the analyses and optimizations happen on the low-level representation of CIRCT, users still see them as a source-to-source transformation.
Thus, vectorization has the additional benefit of improving code readability.

\begin{figure}[ht]
\includegraphics[width=1.0\columnwidth]{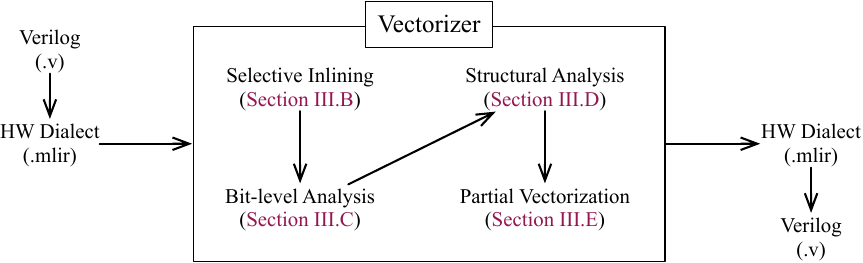}
\caption{The vectorization pipeline.}
\Description{The implementation of vectorization in the CIRCT Framework.}
\label{fig_overview}
\end{figure}

\subsection{CIRCT: The Underlying Infrastructure}
\label{sub_circt}

CIRCT is an open-source compiler infrastructure for hardware design, built on top of MLIR~\cite{Lattner21}.
It supports languages such as Verilog and Chisel.
CIRCT comprises multiple domain-specific intermediate representations, each implemented as an MLIR dialect.
This paper's implementation of vectorization takes place within the \emph{Hardware (\texttt{hw}) dialect}, which introduces data types such as fixed-width integers, arrays, and structs, and specifies module boundaries and ports.
In CIRCT, combinational and sequential behavior are expressed through specialized dialects such as \texttt{comb} and \texttt{seq}.
For example, \texttt{hw} defines modules (\texttt{hw.module}), instantiations (\texttt{hw.instance}), and wires (\texttt{hw.wire}), while \texttt{comb} provides logic operations such as \texttt{comb.and} and \texttt{comb.mux}.
Example~\ref{ex_circt} shows constructions in these different dialects.

\begin{example}
\label{ex_circt}
Figure~\ref{fig_exCirctAssignment}(a) shows a simple Verilog assignment, and Figure~\ref{fig_exCirctAssignment}(b) shows its equivalent specification in CIRCT.
When lowered to the \texttt{hw} and \texttt{comb} dialects, the design is represented structurally as four independent bit-level connections.
Each \texttt{comb.extract} operation isolates one bit from the input vector, and each bit is wired independently to the corresponding output position.  
The \texttt{hw} representation inherits some conventions from the LLVM ecosystem.
For instance, the program in Figure~\ref{fig_exCirctAssignment}(b) is in Static Single Assignment (SSA) form~\cite{Cytron91}, meaning that each variable is defined exactly once.
Moreover, every value is explicitly typed; for example, \texttt{i1} (a single bit) is the type of variable \texttt{\%out\_3}.
\end{example}

\begin{figure}[ht]
\includegraphics[width=1.0\columnwidth]{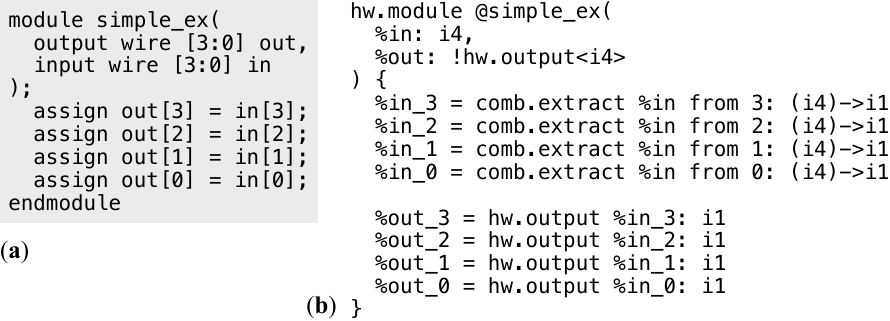}
\caption{(a) Verilog design. (b) Specification in CIRCT.}
\Description{(a) Verilog design. (b) Specification in CIRCT.}
\label{fig_exCirctAssignment}
\end{figure}

\subsection{Selective Inlining}
\label{sub_inlining}

The analyses and transformations described in Sections~\ref{sub_bitLevel}, \ref{sub_structural}, and~\ref{sub_vec} are \emph{intra-module}; that is, they operate within the boundaries of a single Verilog module.
To enable \emph{inter-module} vectorization, we employ \emph{selective inlining} of submodules.
Inlining consists of copying the body of a module into the point where it is instantiated.
By ``selective,'' we mean that a module may be inlined at some instantiation sites but not at others.
This approach is necessary because full inlining can cause code-size explosion.

Determining whether an inlining decision will contribute to vectorization is not possible without performing the inlining itself and rerunning the analyses from Sections~\ref{sub_bitLevel}, \ref{sub_structural}, and~\ref{sub_vec}.
To avoid such iterations while keeping compilation time low, inlining decisions are guided by two heuristics that estimate module regularity and recursive size:
\begin{itemize}
    \item The \textbf{Regularity} Analysis determines whether the callee module consists of simple combinational logic; that is, logic that the vectorizer is able to handle.
    \item The \textbf{Size} Analysis computes the total number of operations in the callee, including those in any instantiated submodules. Inlining is permitted if the cumulative operation count remains below 150 \texttt{hw} instructions.
\end{itemize}
The emphasis on regularity arises because the analyses described in Sections~\ref{sub_bitLevel} and~\ref{sub_structural} can only reason about modules composed of simple combinatorial operations.
The threshold of 150 instructions was chosen empirically, via an experiment that Section~\ref{sub_rq6} will describe.
This number is a compromise between vectorization opportunities and design size.

\subsection{Bit-Level Dataflow Analysis}
\label{sub_bitLevel}

This static analysis identifies groups of scalar assignments that collectively implement a permutation of bits, as illustrated in Figure~\ref{fig_general_pattern}.
The analysis propagates bit-level dataflow through chains of assignments and succeeds only if every output bit can be traced back to a single source vector, forming a one-to-one permutation.

Algorithm~\ref{alg_bit_permutation} summarizes the process of grouping multiple bit-level assignments into a single assignment.
The pass maintains a bit-to-source mapping $\pi$, inspired by \texttt{BitArray}, which associates each output bit with its originating value and source index.
If the mapping is not bijective, or if a bit passes through unsupported operations, the analysis fails.

\begin{algorithm}[htb]
\caption{Bit-Permutation Vectorization}
\label{alg_bit_permutation}
\begin{algorithmic}[1]

\STATE \textbf{Phase 1: Bit-Origin Analysis}
\FOR{$i = 0$ to $N-1$}
  \STATE Trace the dataflow of $out[i]$
  \IF{$out[i]$ does not originate from a single source vector}
    \STATE \textbf{return} \textsc{Fail}
  \ENDIF
  \STATE Record $\pi[i]$ such that $out[i] = in[\pi[i]]$
\ENDFOR

\IF{$\pi$ is not a bijection over $\{0,\dots,N-1\}$}
  \STATE \textbf{return} \textsc{Fail}
\ENDIF

\STATE \textbf{Phase 2: Greedy Grouping}
\STATE $i \gets 0$
\WHILE{$i < N$}
  \STATE $j \gets i$
  \WHILE{$j+1 < N$ \AND $\pi[j+1] = \pi[j] + 1$}
    \STATE $j \gets j + 1$
  \ENDWHILE
  \STATE Append segment $in[\pi[i]..\pi[j]]$
  \STATE $i \gets j + 1$
\ENDWHILE

\STATE \textbf{Phase 3: Emission}
\STATE Emit $out = \texttt{concat}(Segments)$
\STATE \textbf{return} \textsc{Success}
\end{algorithmic}
\end{algorithm}

Once Algorithm~\ref{alg_bit_permutation} identifies a valid permutation, the vectorizer constructs the mapping $\pi$ such that $out[i] = in[\pi[i]]$.
The transformation then applies a greedy grouping strategy that scans $\pi$ and groups maximal contiguous increasing subsequences into extract segments.
Each segment is emitted as a \texttt{comb.extract}, and the full result is assembled using a \texttt{comb.concat}.
Special cases are handled directly:
\begin{itemize}
\item For the identity permutation, the output vector is replaced by the input vector.
\item For a full reversal, the output is replaced by a single \texttt{comb.reverse}.
\item All other permutations are emitted as concatenations of extracted segments.
\end{itemize}
The pass itself does not attempt to merge extract ranges; this simplification is performed later by CIRCT's lowering and printing infrastructure.
Algorithm~\ref{alg_bit_permutation} does not trace through \texttt{comb.and} or \texttt{comb.mux}.
This tracing is performed by the structural analysis described in Section~\ref{sub_structural}.
Example~\ref{ex_general_vec_comb} explains how Algorithm~\ref{alg_bit_permutation} works.

\begin{example}
\label{ex_general_vec_comb}
Algorithm~\ref{alg_bit_permutation} operates on the \texttt{comb} MLIR dialect, which is part of CIRCT.
In our pipeline, Verilog is first lowered into \texttt{comb}. The resulting \texttt{comb} program is then vectorized and later emitted back into high-level Verilog.
Figure~\ref{fig_general_vec_comb} illustrates this process.
The bit-origin analysis of Algorithm~\ref{alg_bit_permutation} detects in the \texttt{comb} program the permutation corresponding to the assignment
\texttt{out = \{in[0], in[3], in[2], in[1]\}}.
During the greedy grouping phase, the algorithm identifies that bits \texttt{in[3:1]} form a contiguous slice that can be emitted as a single operation\footnote{The \texttt{comb.extract} instruction extracts a sequence of bits starting at a given position; for example,
\texttt{comb.extract \%in2 from 1 : (i4) -> i3} extracts three bits starting at position~1, corresponding to the slice \texttt{in[3:1]}.}.
Finally, the vectorized representation is emitted back into Verilog.
The resulting code has lower symbolic complexity, as three individual assignments are replaced with a single slice expression.
\end{example}

\begin{figure}[ht]
\includegraphics[width=1.0\columnwidth]{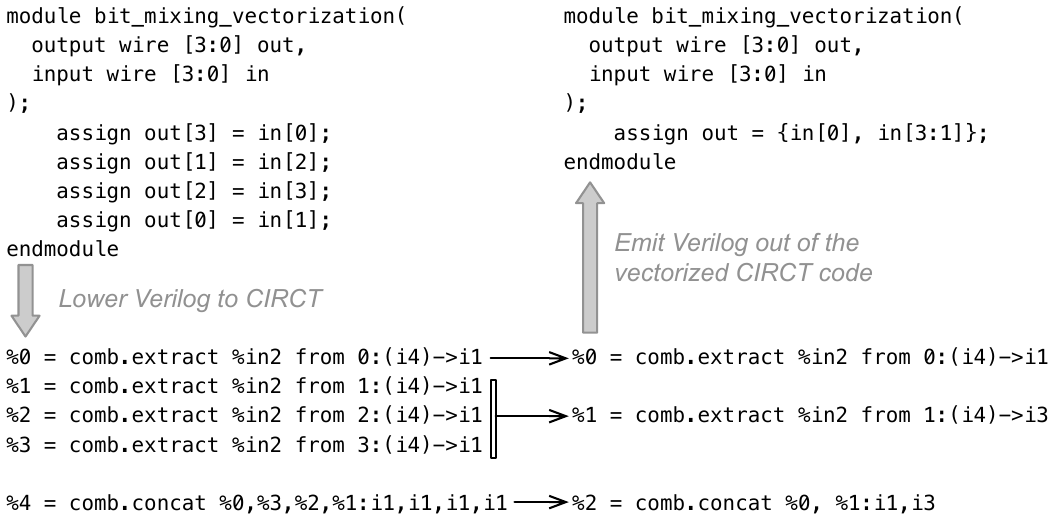}
\caption{Example of vectorizing a bit permutation. Individual assignments in Verilog are lowered to \texttt{comb} operations, grouped through bit-origin analysis, and emitted back as the slice-based expression \texttt{out = \{in[0], in[3:1]\}}.}
\Description{Vectorization of a bit permutation from Verilog to the comb dialect and back to Verilog using a slice expression.}
\label{fig_general_vec_comb}
\end{figure}

\begin{theorem}
\label{theo_bit_permutation}
Let $N$ be the bit-width of a target vector and $D$ be the maximum depth of the dataflow graph. Algorithm 1 (Bit-Permutation Vectorization) identifies and groups bit-level assignments into a word-level concatenation in $O(N \cdot D)$ time.
\end{theorem}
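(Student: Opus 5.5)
The proof follows the three phases of Algorithm~\ref{alg_bit_permutation}: we bound each phase separately and sum the bounds.

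\textbf{Phase~1.} For each output bit $out[i]$, the bit-origin analysis walks backwards through the chain of assignments and supported bit-level operations (extracts, concatenations, wire forwarding) that define it. By the definition of $D$, each such backward walk visits at most $D$ nodes of the dataflow graph. The step that needs care is arguing that each node costs $O(1)$. Two facts give this. First, the IR is in SSA form (Example~\ref{ex_circt}), so every value has exactly one defining operation, which is reachable in constant time. Second, a bit at a given index of an \texttt{extract} or \texttt{concat} result maps to exactly one operand bit through constant offset arithmetic; for \texttt{concat}, I would assume the operand boundaries are precomputed or that concatenations have bounded arity. Under these assumptions a single trace costs $O(D)$, and tracing all $N$ bits costs $O(N \cdot D)$. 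Recording $\pi[i]$ together with its source value and checking that all bits share one source vector adds $O(1)$ per bit. If a trace reaches an unsupported operation, the analysis fails early, which only lowers the cost. This phase is the main obstacle, because the $O(1)$-per-node claim depends on the shape of the operations. If concatenations could have unbounded arity, I would first try the precomputed prefix-offset table to recover constant-time lookup. Failing that, I would absorb the lookup cost into the definition of $D$ as the longest path measured in bit-steps.

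\textbf{Bijectivity check.} We allocate a boolean array of size $N$, set the entry $\pi[i]$ for each $i$, and reject if any entry is out of range or is already marked. This costs $O(N)$ and is correct by pigeonhole: an injective map from $\{0,\dots,N-1\}$ to itself is a bijection.

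\textbf{Phase~2.} The greedy grouping is a two-pointer scan. The outer index $i$ jumps to $j+1$ after each segment, and the inner index $j$ only increases. Together they advance monotonically from $0$ to $N$, so the total work is $O(N)$ and at most $N$ segments are produced. For correctness, each segment is a maximal run with $\pi[k+1]=\pi[k]+1$, so $in[\pi[i]..\pi[j]]$ reproduces exactly the bits $out[i..j]$.

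\textbf{Phase~3.} Emission creates one \texttt{extract} per segment plus a single \texttt{concat}, which is $O(N)$. The special cases are detected in $O(N)$: the identity permutation when $\pi[i]=i$ for all $i$, and full reversal when $\pi[i]=N-1-i$ for all $i$.

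\textbf{Total.} Summing the phases gives $O(ND)+O(N)+O(N)+O(N)=O(N\cdot D)$, since $D\ge 1$.
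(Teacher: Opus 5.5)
Your proposal is correct and follows the paper's proof: an $O(D)$ origin trace for each of the $N$ output bits gives $O(N\cdot D)$, the grouping scan costs $O(N)$, and emission costs $O(N)$. The extra care you take is left implicit in the paper, which simply treats each step of the trace as unit cost. That care consists of justifying $O(1)$ work per tracing step (via SSA and constant-offset index mapping through \texttt{extract}/\texttt{concat}) and giving the bijectivity check its own $O(N)$ bound. Keep your bounded-arity or precomputed-offset assumption rather than the fallback of redefining $D$ in bit-steps, since the fallback would change the statement.
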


\begin{small}
\begin{quotation}
\textbf{Proof:} The algorithm operates in three phases. First, for each of the $N$ bits in the output vector, it performs a recursive dataflow trace to identify its origin. Since each trace is bounded by the graph depth $D$, this phase takes $O(N \cdot D)$. Second, it performs a linear scan ($O(N)$) to identify contiguous sequences of bits that form a valid permutation. Finally, it emits the resulting segments, which is at most $O(N)$ operations. The total complexity is dominated by the initial tracing, resulting in $O(N \cdot D)$.
\end{quotation}
\end{small}

\subsection{Structural Analysis}
\label{sub_structural}

The structural analysis identifies {\it logic cones}.
A logic cone is the set of upstream operations that influence a single scalar output bit.
Definition~\ref{def_logic_cone} formalizes this notion.

\begin{definition}[Logic Cone]
\label{def_logic_cone}
Let a circuit be represented as a directed acyclic graph whose nodes are
operations and whose edges represent bit-level data dependencies.
Let $b$ be a scalar output bit.
The \emph{logic cone} of $b$, denoted $LC(b)$, is defined recursively as follows:
\begin{description}
\item[Base case:] 
If $b$ is a primary input bit or a constant, then:
\[
LC(b) = \{b\}.
\]

\item[Recursive case:]
If $b$ is produced by an operation $op$ whose input bits are
$i_1, i_2, \dots, i_n$, then:
\[
LC(b) =
\{op\} \cup \bigcup_{k=1}^{n} LC(i_k).
\]
\end{description}
\end{definition}

Definition~\ref{def_logic_cone} implies that the logic cone of a bit $b$ contains the operation that produces $b$ together with $b$'s {\it backward slice}; that is, the logic cones of all bits that influence $b$'s computation.
Example~\ref{ex_logic_cone} illustrates this concept.

\begin{example}
\label{ex_logic_cone}
Figure~\ref{fig_logic_cone} shows the logic cone that defines bit \texttt{\%out}.
This example illustrates an important fact: logic cones are acyclic.
A logic cone is acyclic because it is defined over the
combinational portion of the circuit. Sequential elements, such as
registers, break feedback paths and therefore act as boundaries of
logic cones.
In the \texttt{comb} dialect of CIRCT, operations are guaranteed to be combinational, so the intermediate representation itself already enforces the absence of cycles.
In other words, logic cones in \texttt{comb} are always directed acyclic graphs by construction.
\end{example}

\begin{figure}[ht]
\includegraphics[width=1.0\columnwidth]{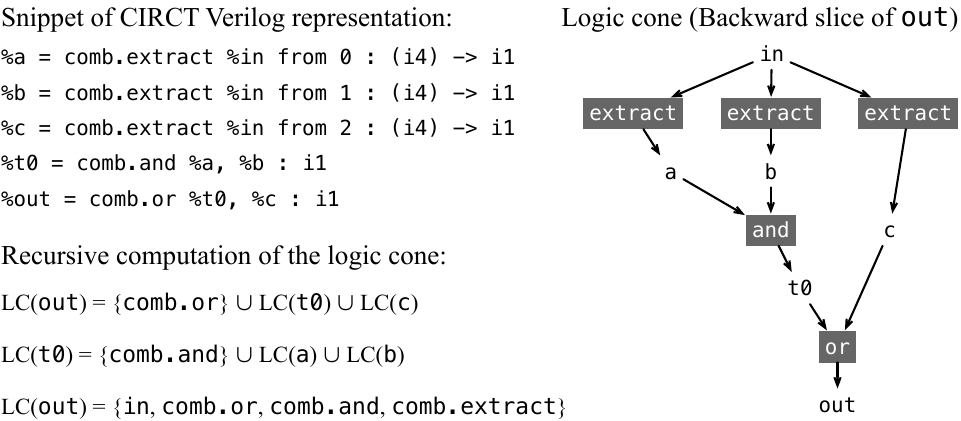}
\caption{Computation of logic cone.}
\Description{Computation of logic cone.}
\label{fig_logic_cone}
\end{figure}

\subsubsection{Grouping of Logic Cones.}
\label{sss_grouping}

Structural Analysis determines whether a set of bit-level operations can be safely collapsed into a single vector operation by verifying if:
\begin{enumerate}
\item The cones of all output bits are \emph{independent} (no cross-bit dependencies);
\item The cones are \emph{isomorphic} (identical structure with a uniform bit stride).
\end{enumerate}
When these two conditions hold, the vectorizer replaces $N$ replicated scalar operations with a single vector operation.
For a vector-typed output value with $N$ bits, the analysis constructs $N$ logic cones and performs the following checks:
\begin{itemize}
\item \textbf{Safety:} Cones must not overlap; shared intermediate values would indicate bit-to-bit dependencies (e.g., carry chains), which prevent vectorization.

\item \textbf{Equivalence:} Cones must exhibit the same operator structure and consistent index progression across the input vectors (e.g., $a[i]$, $b[i]$, $c[i]$).
\end{itemize}
If the cones are structurally isomorphic and semantically independent, then they are merged and lowered to the most concise vector operation available in the target dialect.

\begin{example}
\label{ex_structural}
Each output bit in Figure~\ref{fig_complex_pattern} is computed using the same logical structure applied independently to the corresponding bits of two input buses.
For a bit-vector output $out$, the \emph{logic cone} of $out[i]$ is the complete set of upstream operations that may affect its value.
Structural Analysis first ensures that cones are \emph{independent}, except for constants and invariant values (like inputs):
\[
\begin{aligned}
\forall i \neq j,\;
\mathit{Cone}(out[i]) \cap \mathit{Cone}(out[j]) = S
\;\wedge\; S \neq \emptyset
\Rightarrow \\
\forall a \in S,\; a \text{ is invariant}
\end{aligned}
\]
Next, it checks that all cones are \emph{isomorphic}:
\[
\mathit{Cone}(out[0]) \cong \mathit{Cone}(out[1]) \cong \dots \cong \mathit{Cone}(out[N-1]),
\]
meaning they contain identical operator structure and width progression.
This equivalence check performs a subgraph isomorphism test that verifies:
(i) the same operators appear in each cone,
(ii) with the same structural connectivity and operand ordering, and
(iii) bit extraction follows a consistent index stride (e.g., $a[i]$, $b[i]$, control signals).
In this case, all cones satisfy independence and isomorphism, enabling the replacement of $N$ replicated bitwise expressions with a single vector-level operation (e.g., a conditional select).
\end{example}

\subsubsection{Implementation Details.}
\label{sss_impl}

Algorithm~\ref{alg_cone} summarizes the implementation of structural vectorization.
The algorithm operates independently on each module $M$ that forms a Verilog design $P$ and targets vector-typed outputs. For a vector output $v$ of width $N$, the algorithm first constructs the \emph{backward cone} of each bit $v[i]$ (this process will be explained in Section~\ref{sss_backward}).

\begin{algorithm}[htb]
\caption{Structural Vectorization via Backward Cones}
\label{alg_cone}
\begin{algorithmic}[1]
\STATE \textbf{Procedure} \textsc{Vectorize}($Design\ P$)
    \FORALL{modules $M \in P$}
        \FORALL{outputs $v$ in $M$ with bitwidth $N > 1$}
            \STATE $C \gets [\;]$ \COMMENT{list of backward cones}
            \FOR{$i \gets 0$ \TO $N-1$}
                \STATE $C[i] \gets$ \textsc{BackwardCone}($v[i]$)
            \ENDFOR
            \IF{\textsc{IsIndependent}($C$) \textbf{and} \textsc{IsIsomorphic}($C$)}
                \STATE $Perm \gets$ \textsc{ExtractPermutation}($C$)
                \STATE $Expr \gets$ \textsc{BuildVectorExpr}($C[0], N$)
                \STATE \textsc{ReplaceWithVectorAssign}($v, Expr, Perm$)
            \ENDIF
        \ENDFOR
    \ENDFOR
\end{algorithmic}
\end{algorithm}

Once all cones are collected, Algorithm~\ref{alg_cone} checks two safety conditions (see Line 8). First, the cones must be \emph{independent}, meaning that no operation appears in more than one cone. This condition ensures that the bits do not share intermediate values (e.g., carry chains), which would prevent vectorization. Second, the cones must be \emph{isomorphic}, i.e., they must exhibit the same operator structure and a consistent progression of bit indices across their inputs. Intuitively, this guarantees that each bit performs the same computation over different positions of the input vectors.

If both conditions hold, Algorithm~\ref{alg_cone} extracts the index permutation that maps each output bit to its corresponding input position (Line 9). This step captures cases such as identity mappings, reversals, and general shuffles. The vector expression is then reconstructed from a representative cone (the least significant bit), lifting scalar operations to their vector equivalents (Line 10). Finally, the original scalar assignments are replaced with a single vector assignment that preserves the detected permutation (Line 11).


\begin{theorem}
\label{theo_structural}
Given a hardware module with $V$ operations and $E$ dependencies, the structural vectorization pass (Algorithms 2 and 3) correctly identifies isomorphic logic cones in $O(N \cdot (V + E))$ time.
\end{theorem}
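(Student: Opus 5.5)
The plan mirrors the proof of Theorem~\ref{theo_bit_permutation}: split the analysis along the lines of Algorithm~\ref{alg_cone} (cone construction, the two checks on Line~8, then Lines~9--11), bound each phase, and argue correctness separately. For a fixed vector output $v$ of width $N$, each call to \textsc{BackwardCone}$(v[i])$ (Algorithm~3, Section~\ref{sss_backward}) is a backward traversal of the module's dataflow DAG starting at the operation that produces $v[i]$. I will implement it as a DFS with a visited set. Each of the $V$ operations is then pushed at most once and each of the $E$ dependency edges is examined at most once, so one cone costs $O(V+E)$. Acyclicity of the \texttt{comb} dialect (Example~\ref{ex_logic_cone}) guarantees termination. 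It also guarantees that the set computed is exactly $LC(v[i])$ from Definition~\ref{def_logic_cone}, by induction on the topological order. Over $N$ bits this gives $O(N\cdot(V+E))$.

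Next come the checks on Line~8. \textsc{IsIndependent} can be done with a single global map from operation to owning cone index. While inserting the operations of cone $i$, a hit on an entry owned by some $j\neq i$ signals overlap, unless that operation is invariant (a constant or a primary input, as in Example~\ref{ex_structural}). The total work is $\sum_i |C[i]| \le N\cdot V$. \textsc{IsIsomorphic} should not be a general subgraph-isomorphism test. Instead I will do a simultaneous lockstep traversal of $C[0]$ and $C[i]$ from their roots, matching operator kind, operand order and extract offset. In that traversal, the stride $\delta$ is fixed by the first extract pair and checked against every later one. Since cones are DAGs with fixed operand order, the correspondence is forced at every step, so each comparison costs $O(|C[0]|+|C[i]|)=O(V+E)$, and the $N-1$ comparisons again total $O(N\cdot(V+E))$. \textsc{ExtractPermutation} reads one index per cone, costing $O(N)$. \textsc{BuildVectorExpr} copies $C[0]$ once, costing $O(V+E)$. \textsc{ReplaceWithVectorAssign} costs $O(N)$ for the concat/extract emission, reusing the grouping of Algorithm~\ref{alg_bit_permutation}.

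For correctness, I will argue soundness bit by bit. Independence ensures that replacing the $N$ cones by one vector expression does not change any shared intermediate value. Isomorphism with a uniform stride ensures, by induction over the forced correspondence, that $v[i]$ equals the bit-lifted expression of $C[0]$ evaluated at index $\pi(i)$. Hence the vector assignment with permutation $\pi$ agrees with the original on every bit.

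The main obstacle is the isomorphism step. General subgraph isomorphism is NP-hard, so the claimed bound holds only because the correspondence is deterministic: roots map to roots and operands map in order. I must make this explicit, and also handle DAG sharing inside a single cone by memoizing node pairs, so that the lockstep walk is linear rather than exponential in the number of paths. A secondary subtlety is that the $O(N\cdot(V+E))$ bound is per vector output. Summed over all outputs of a module, the same reasoning gives $\sum_v N_v\cdot(V+E)$, and I would state the theorem with $N$ read as the width of the output being vectorized.
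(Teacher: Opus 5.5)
Your proposal is correct and follows the same route as the paper's proof: $N$ backward traversals costing $O(V+E)$ each, followed by an isomorphism comparison that stays linear because it only matches operator kinds and index strides. You fill in what the paper's proof leaves implicit. You show why that comparison is linear: the correspondence from root to root and operand to operand in order is forced, so memoized lockstep walks replace a general subgraph-isomorphism search. You also account for the cost of \textsc{IsIndependent} and Lines~9--11, and you note that $N$ must be read as the width of the output being vectorized.
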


\begin{small}
\begin{quotation}
\textbf{Proof:} The algorithm constructs $N$ separate backward logic cones, one for each bit of the output signal. Each backward traversal (Algorithm 3) explores the dependency graph, which in the worst case visits every vertex and edge, taking $O(V + E)$. For $N$ bits, this total construction time is $O(N \cdot (V + E))$. The subsequent isomorphism check (Algorithm 2) compares these $N$ cones. Since it only requires verifying identical operator types and linear index strides across the established cones, the comparison remains proportional to the size of the graph.
\end{quotation}
\end{small}

\subsubsection{Construction of Backward Cones}
\label{sss_backward}

Algorithm~\ref{alg_backward} shows the procedure \textsc{BackwardCone}, which constructs the dependency subgraph associated with a scalar value. Starting from a root bit, it performs a backward traversal of the dataflow graph by recursively visiting the defining operation of each value and its operands. The traversal continues until it reaches primary inputs, constants, or other values with no defining operation. The result is the set of operations that influence the computation of the root bit.

\begin{algorithm}[htb]
\caption{Backward Cone Construction}
\label{alg_backward}
\begin{algorithmic}[1]
\STATE \textbf{Procedure} \textsc{BackwardCone}($root$)
\STATE $Worklist \gets [root]$
\STATE $Cone \gets \emptyset$
\STATE $Visited \gets \emptyset$

\WHILE{$Worklist \neq \emptyset$}
    \STATE $v \gets$ pop($Worklist$)
    \IF{$v \in Visited$}
        \STATE \textbf{continue}
    \ENDIF
    \STATE add $v$ to $Visited$

    \STATE $op \gets$ defining operation of $v$
    \IF{$op \neq \varnothing$}
        \STATE add $op$ to $Cone$
        \FORALL{operands $u$ of $op$}
            \STATE push $u$ into $Worklist$
        \ENDFOR
    \ENDIF
\ENDWHILE
\STATE \textbf{return} $Cone$
\end{algorithmic}
\end{algorithm}

The algorithm uses a worklist to ensure that each value is visited at most once, thus avoiding redundant exploration in the presence of shared subexpressions. The resulting cone is a directed acyclic subgraph of the module's dataflow graph, as Figure~\ref{fig_logic_cone} shows. The cone captures all combinational dependencies required to compute the bit.
This representation is subsequently used by Algorithm~\ref{alg_cone} to compare cones across different bit positions. Independence is checked by verifying that cones do not share operations, while structural isomorphism is determined by comparing the shape of the cones and the relative indices of their inputs.

\begin{theorem}
\label{theo_partial}
The partial vectorization strategy (Algorithm 4) using a greedy windowing approach has a worst-case time complexity of $\Theta(N^3)$, where $N$ is the vector bit-width.
\end{theorem}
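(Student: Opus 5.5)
The plan is to fix a model for Algorithm 4 (partial vectorization), since it has not yet been given explicitly. When full vectorization of an $N$-bit output fails, the greedy windowing strategy scans the output starting at a position $i$. It tries windows $[i, j]$ of decreasing (or increasing) length. For each candidate window it runs the per-bit checks of the structural analysis: bit-origin tracing as in Theorem~\ref{theo_bit_permutation}, and the independence and isomorphism tests of Algorithm~\ref{alg_cone}, restricted to the bits in the window. Once it finds the maximal window that vectorizes, it commits that segment and resumes at $j+1$; otherwise it emits bit $i$ as a scalar and moves on. I will treat the cost of processing a single bit inside one window check (tracing its cone and comparing it with the representative cone) as $O(1)$. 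This assumes cones of bounded depth and size, so that $D$ and the per-cone size from Theorem~\ref{theo_structural} are constants. The goal is then to show that the complexity is cubic in $N$ alone.

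\textbf{Upper bound.} A window check on $[i,j]$ costs $\Theta(j-i+1)$: every bit must be compared against the representative, and the independence check is linear in the window if it is done with a hash set of visited operations. From a fixed start $i$ the strategy tries at most $N-i$ window lengths. So the work charged to start $i$ is at most $\sum_{\ell=1}^{N-i} O(\ell) = O((N-i)^2)$. There are at most $N$ starts, and summing over them gives $\sum_{i} O((N-i)^2) = O(N^3)$.

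\textbf{Lower bound.} The step I expect to be the main obstacle is showing that this bound is tight. A natural worst case, where the first window tried already succeeds, costs only $O(N^2)$ or less, so the adversarial input has to be built deliberately. I would take an output in which every cone is isomorphic to its neighbours except for a single defect at the very end of any window. For example, let the index stride break between every pair of consecutive bits, so that no window of length $\ge 2$ succeeds. Suppose also that the isomorphism test compares bits in order and only discovers the mismatch at the last bit of the window. Then, starting from $i$, the strategy tries lengths $N-i, N-i-1, \dots, 2$, and each try costs linear time before it fails. It then advances by a single bit. The total cost is $\sum_i \Theta((N-i)^2) = \Theta(N^3)$. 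To make the construction robust, I would fix a concrete check order, for instance comparing the representative against the bits from the low end upward. I would then make the mismatch show up only at the highest-indexed bit of each tried window. One way to do this is a control signal that agrees on a prefix, but I must make sure that this still forces failure for every window length. If that construction proves awkward, I would fall back to arguing that the check must at least read every bit in the window to certify isomorphism. Any window that succeeds or fails late then costs $\Omega(\ell)$, and I would choose an input on which every window fails at its final bit.

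Combining the two bounds gives $\Theta(N^3)$.
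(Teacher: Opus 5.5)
Your upper bound is the paper's own argument. Algorithm~\ref{alg_partial_vec} examines $O(N^2)$ chunks $out[i:j]$, and each predicate call is linear in the chunk width. You usefully make explicit the bounded-cone assumption behind that linearity, which the paper leaves implicit. The paper's proof stops there and simply writes $\Theta(N^3)$. You rightly see that tightness needs an adversarial input, but the one you sketch does not work.

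\emph{The adjacent-defect input is too cheap.} A defect between every pair of adjacent bits is found within the first couple of bits that an in-order, stop-at-first-mismatch check examines. Each failed check then costs $O(1)$, and the total is only $O(N^2)$.

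\emph{The refinement is unsatisfiable.} ``Every tried window fails only at its highest bit'' cannot hold. The comparisons that passed on the lower bits of such a window use the same representative and the same bits as the window that drops the offending top bit. The scan also visits that window: within the same start in your orientation, or at the next start in Algorithm~\ref{alg_partial_vec}'s orientation. That window therefore succeeds, contradicting the requirement. Your fallback relies on the same requirement.

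\emph{Your write-up is also inconsistent.} The upper-bound paragraph asserts that every check costs $\Theta(\ell)$ because every bit must be compared. The lower-bound paragraph then doubts exactly that.

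What is missing is a cost model for the predicates, fixed before you build the adversary. Read \textsc{CheckStructure} as Algorithm~\ref{alg_cone} restricted to the chunk. It calls \textsc{BackwardCone} on every bit before evaluating \textsc{IsIndependent} and \textsc{IsIsomorphic}, so each call costs $\Omega(\ell)$ whether it succeeds or fails. Then no late defect is needed, only an input on which no chunk of width $\ge 2$ passes either predicate. For example, take $out[k]=a[k]\wedge b[k]$ for even $k$ and $out[k]=a[k]\vee b[k]$ for odd $k$. Algorithm~\ref{alg_bit_permutation} does not trace through logic operations, and adjacent cones are non-isomorphic. The inner loop never breaks, each start $i$ tries all $i$ chunks, and the cost is $\sum_{i<N}\sum_{j<i}\Theta(i-j+1)=\Theta(N^3)$.

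If you want early-exit checks instead, put the defect at a fixed interior position rather than at the end of each window. Give bits $0,\dots,N/2-1$ mutually isomorphic, independent cones, and let the upper half alternate between two other shapes. With a low-to-high scan against the least-significant representative, every start $i\ge N/2$ fails all its chunks. Those with $j<N/2$ each cost $\Theta(N/2-j)$, which again gives $\Theta(N^3)$.

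The model genuinely matters. Suppose each check scans down from bit $i$ against a reference taken there and stops at the first defect. Then the failure point no longer depends on $j$, and a short amortization bounds the total by $O(N^2)$. So the $\Theta(N^3)$ claim is not independent of how the predicates are implemented.
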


\begin{small}
\begin{quotation}
\textbf{Proof:} Algorithm 4 attempts to find the largest vectorizable sub-ranges when a full vectorization fails. It employs a nested loop structure: the outer loop iterates through starting bits $i \in [N-1, 0]$, and the inner loop searches for the largest suffix $j$. This creates $O(N^2)$ candidate sub-vectors. For each candidate, it invokes the vectorization predicates (Bit-Permutation or Structural), which themselves scale linearly $O(N)$ with respect to the width of the segment. Consequently, the total worst-case complexity is $O(N^2 \cdot N) = \Theta(N^3)$.
\end{quotation}
\end{small}

\subsection{Partial Vectorization}
\label{sub_vec}

If Bit-Level and Structural Analyses fail to vectorize the full output bus, the vectorizer attempts a best-effort approach.
Instead of requiring the entire vector to follow a recognized pattern, this strategy discovers \emph{contiguous sub-ranges} (\emph{chunks}) that can be vectorized independently.
Algorithm~\ref{alg_partial_vec} summarizes this final part of the vectorization pipeline.

\begin{algorithm}[htb]
\caption{Partial Vectorization (Greedy Segmentation)}
\label{alg_partial_vec}
\begin{algorithmic}[1]
\STATE \textbf{Procedure} \textsc{PartialVectorize}($OutputBus\ out, N$)
\STATE $i \gets N - 1$
\WHILE{$i \geq 0$}
    \STATE $found \gets \textbf{false}$
    \FOR{$j \gets 0$ \TO $i-1$}
        \STATE $Chunk \gets out[i:j]$
        \IF{\textsc{CheckBit}($Chunk$) \OR \textsc{CheckStructure}($Chunk$)}
            \STATE \textsc{ApplyVectorization}($Chunk$)
            \STATE $i \gets j - 1$
            \STATE $found \gets \textbf{true}$
            \STATE \textbf{break}
        \ENDIF
    \ENDFOR
    
    \IF{\textbf{not} $found$}
        \STATE \textsc{EmitScalar}($out[i]$)
        \STATE $i \gets i - 1$
    \ENDIF
\ENDWHILE
\end{algorithmic}
\end{algorithm}

The algorithm scans the output vector from the Most Significant Bit (MSB) to the Least Significant Bit (LSB) using a sliding window.
For each bit position $i$, it searches for the largest suffix $out[i:j]$ that satisfies either the Bit-Level or Structural vectorization criteria.
If such a chunk is found, it is vectorized immediately, and the scan resumes at bit $j-1$.
If no valid chunk is found, the window size is reduced, and the search continues.

\begin{example}
Figure~\ref{fig_mixedVectorization} shows an instance of partial vectorization.
The full four-bit permutation in lines 5-8 does not match a known pattern.
However, partial vectorization may still discover that:
$out[3:1] \gets(in[3:1])$ and that $out[0]$ is amenable to structural vectorization.
In this case, two-bit chunks are vectorized independently, and the final result is reconstructed using a unified assignment.
\end{example}

\begin{figure}[ht]
\includegraphics[width=1.0\columnwidth]{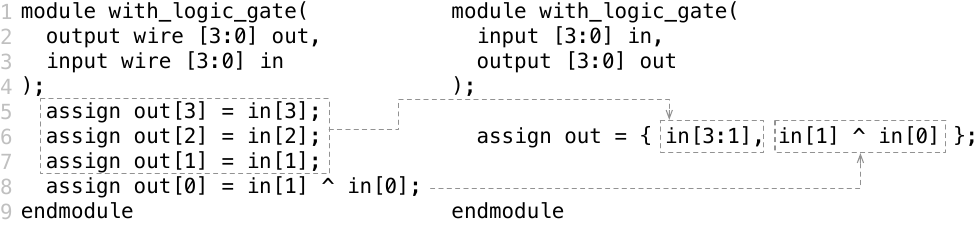}
\caption{Partial vectorization of two chunks of bits.}
\Description{Partial vectorization of two chunks of bits.}
\label{fig_mixedVectorization}
\end{figure}

\begin{theorem}
\label{theo_partial_complexity}
The greedy segmentation process implemented in Algorithm~\ref{alg_partial_vec} has a worst-case time complexity of $\mathcal{O}(N^2 \cdot T_{\text{check}}^{\max})$, which simplifies to $\Theta(N^3)$ when the vectorization predicate $T_{\text{check}}$ runs in linear time relative to the vector width $N$.
\end{theorem}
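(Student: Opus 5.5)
The plan is to bound the work of Algorithm~\ref{alg_partial_vec} by counting predicate invocations and charging each one its cost, and then to exhibit a family of inputs that forces the matching lower bound. Write $T_{\text{check}}(w)$ for the combined cost of \textsc{CheckBit} and \textsc{CheckStructure} on a chunk of width $w$, and set $T_{\text{check}}^{\max} = \max_{w \le N} T_{\text{check}}(w)$.

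For the upper bound, first observe that every iteration of the outer \textbf{while} loop strictly decreases $i$. Either a chunk is found and $i$ becomes $j-1 < i$, or \textsc{EmitScalar} is called and $i$ becomes $i-1$. The outer loop therefore runs at most $N$ times. In the iteration with current index $i$, the inner \textbf{for} loop makes at most $i \le N-1$ predicate calls, each on a chunk of width at most $N$. Each call therefore costs at most $T_{\text{check}}^{\max}$. The remaining work per outer iteration is one \textsc{ApplyVectorization} or one \textsc{EmitScalar}, which we will assume costs $O(N)$, since it only rewrites at most $N$ bits. Summing over the iterations gives at most $\sum_{i} i = O(N^2)$ predicate calls and $O(N^2)$ total rewriting work. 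Hence the total cost is $\mathcal{O}(N^2 \cdot T_{\text{check}}^{\max})$, and substituting $T_{\text{check}}^{\max} = O(N)$ gives $O(N^3)$. By Theorem~\ref{theo_bit_permutation} and Theorem~\ref{theo_structural}, the predicates are linear in the width when $D$ and the cone sizes per bit are treated as bounded, which is exactly the hypothesis of the statement.

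For the matching lower bound, I would take the worst case to be a bus on which no chunk of width at least two is vectorizable. A concrete choice is for each $out[k]$ to be driven by a structurally different operator applied to an unrelated input. In that case every outer iteration exhausts its inner loop, so the outer loop runs exactly $N$ times with $i = N-1, N-2, \dots, 0$. The number of predicate calls is then exactly $\sum_{i=0}^{N-1} i = N(N-1)/2 = \Theta(N^2)$. For the cubic bound I would not simply multiply this count by $N$, because most chunks are narrower than $N$. Instead I would sum the chunk widths directly:
\[
\sum_{i=0}^{N-1}\sum_{j=0}^{i-1} (i-j+1) \;=\; \Theta(N^3).
\]
This shows that the loop structure itself, and not merely the crude upper estimate, produces cubic work when the check costs $\Theta(\text{width})$.

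The main obstacle is justifying that the predicates really spend $\Theta(w)$ time on a failing chunk of width $w$, since a check could reject after inspecting only a few bits. I would handle this by choosing the adversarial family so that failure is detected only at the end of each check. For \textsc{CheckBit}, one option is to make all bits trace to the same source but with a duplicated index, so that the bijectivity test after Phase~1 of Algorithm~\ref{alg_bit_permutation} fails only after all $w$ traces. Alternatively, I would state as an explicit modelling assumption, matching the statement's phrase ``runs in linear time,'' that $T_{\text{check}}(w) = \Theta(w)$ in the worst case. Under that assumption the width sum above yields $\Theta(N^3)$ exactly, which together with the upper bound completes the argument.
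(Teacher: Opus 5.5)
Your upper bound is the paper's argument. The paper likewise counts $O(N^2)$ candidate chunks from the nested loops and charges each a linear predicate. Where you go further is the lower bound, which the paper never supplies: it simply writes $\mathcal{O}(N^2\cdot N)=\Theta(N^3)$, which justifies only the $O$. Your all-failing family, together with the width sum $\sum_{i}\sum_{j<i}(i-j+1)=\Theta(N^3)$, is what actually earns the $\Theta$, so your version is strictly more complete.

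One place to tighten is how you get the $\Omega(w)$ cost per call. Your duplicated-index family for \textsc{CheckBit} is a different input from your non-vectorizable family. You would still have to show that \textsc{CheckStructure} rejects every chunk of it. For instance, a sub-chunk with a consistent stride such as $\{in[1],in[0]\}$, or a stride-$0$ broadcast, might be accepted, and then the run ends early.

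You do not need that trick. Use an alternating family instead, say $out[k]=a[k]\wedge b[k]$ for even $k$ and $a[k]\vee b[k]$ for odd $k$. Then \textsc{CheckBit} fails on the unsupported operator, so the short-circuit disjunction in Algorithm~\ref{alg_partial_vec} always evaluates \textsc{CheckStructure}. Algorithm~\ref{alg_cone} builds all $w$ backward cones (lines 5--7) before it tests independence and isomorphism (line 8). Hence every call already costs $\Omega(w)$, with no extra modelling assumption.

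If you keep the assumption anyway, state it as a per-call lower bound on the failing chunks your adversary produces. ``$\Theta(w)$ in the worst case'' bounds only the maximum over inputs of width $w$ and does not constrain those particular calls.

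Finally, Theorem~\ref{theo_structural}'s $O(N(V+E))$ is not linear in the width once $V+E$ grows with $N$. Your ``bounded per-bit cone size'' remark is really an appeal to Algorithm~\ref{alg_backward} visiting only the cone. That is fine, but it is not what the theorem states, and the statement takes linearity as a hypothesis anyway.
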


\begin{small}
\begin{quotation}
\textbf{Proof:} The algorithm identifies vectorizable sub-ranges by employing a nested loop structure. An outer loop iterates through starting bits $i \in [N-1, 0]$, while an inner loop evaluates potential end-bits $j < i$ to find the largest possible contiguous chunk. This results in $\mathcal{O}(N^2)$ candidate sub-vectors. For each candidate, the algorithm invokes a vectorization predicate (such as the Bit-Permutation or Structural checks). Given that these predicates scale linearly ($O(N)$) with the width of the segment being analyzed, the total worst-case complexity is $\mathcal{O}(N^2 \cdot N) = \Theta(N^3)$. As noted in Section~\ref{sub_rq3}, the practical performance often approaches $\mathcal{O}(N^2)$ due to the greedy nature of the discovery, which allows the algorithm to skip redundant checks once a maximal chunk is identified.
\end{quotation}
\end{small}

\section{Evaluation}
\label{sec_eval}

This section evaluates the following research questions:
\begin{itemize}
\item \textbf{RQ1:} What is the impact of vectorization on the instruction count of the \textsc{ChiBench} benchmarks?

\item \textbf{RQ2:} What are the absolute running time and asymptotic behavior of automatic vectorization?

\item \textbf{RQ3:} What is the impact of vectorization on Jasper's running time and memory consumption?

\item \textbf{RQ4:} What is the impact of vectorization on Genus's running time and memory consumption?

\item \textbf{RQ5:} What is the contribution of each technique seen in Section~\ref{sec_impl} on the number of vectorized patterns?

\item \textbf{RQ6:} How does the inlining threshold impact vectorization opportunities?
\end{itemize}
Below, we describe the experimental setup:

\noindent
\textbf{Hardware:}
All experiments were conducted on an AMD Ryzen 9 5900X processor with 32~GB of RAM, two L1 caches (384~KB each), one L2 cache (6~MB), and one L3 cache (64~MB), running Ubuntu Linux 22.04.5~LTS.

\noindent
\textbf{Software:}
Vectorization was implemented on top of CIRCT Release~192 (Oct-25).
Section~\ref{sub_rq3} uses the Jasper Formal Verification Platform release  2023.12p002, and Section~\ref{sub_rq4} uses the Genus Synthesis Solution version 21.19-s055\_1.

\noindent
\textbf{Benchmarks:}
Experiments use the \textsc{ChiBench} collection of Verilog designs~\cite{Sumitani24}.
\textsc{ChiBench} contains 49,599 designs mined from open-source repositories under permissible licenses.
Among these, 1,157 designs exhibited vectorization opportunities.
Unless stated otherwise, the numbers reported in this section refer to this subset of 1,157 Verilog files.

\noindent
\textbf{Correctness:}
Jasper Sequential Equivalence Checking (SEC) app has been used to demonstrate the semantic equivalence of each one of the 1,157 vectorized designs. 

\subsection{RQ1: Instruction Count}
\label{sub_rq1}

We report the size of Verilog designs in terms of the number of CIRCT's \texttt{hw} instructions that represent them.
Vectorization replaces multiple scalar instructions with one vectorized instruction.
Therefore, vectorization is expected to reduce the size of designs.
This subsection investigates this effect.

\noindent
\textbf{Discussion:}
Figure~\ref{fig_scatter_before_vs_after} compares the number of instructions before and after vectorization.
Among the 1,157 designs that presented vectorization opportunities, 904 experienced code-size reduction, 8 showed an increase in instruction count, and 245 remained unchanged, although they were transformed.

\begin{figure}[ht]
\includegraphics[width=0.9\columnwidth]{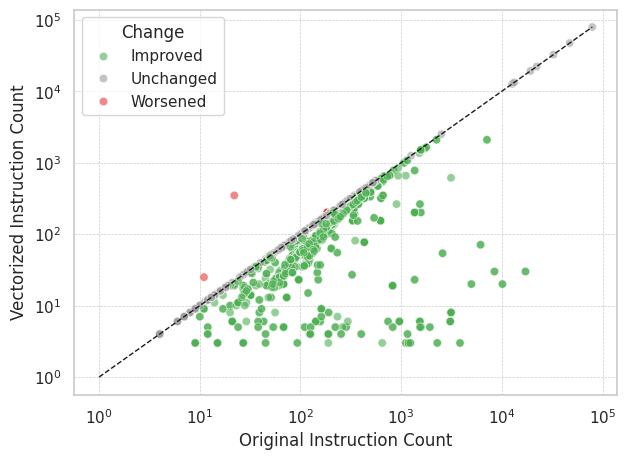}
\caption{Scatter plot comparing CIRCT instruction counts before and after vectorization. Points below the diagonal indicate reductions, points above indicate increases, and points on the diagonal indicate no change.}
\Description{Scatter plot comparing instruction counts before and after vectorization. Points below the diagonal indicate reductions, points above indicate increases, and points on the diagonal indicate no change.}
\label{fig_scatter_before_vs_after}
\end{figure}

Figure~\ref{fig_hist_reduction_pct_improved} shows the distribution of instruction-count reductions among the 904 designs that improved.
Most reductions range between 40\% and 50\%, but some designs achieved nearly 100\% reduction.
The four most dramatic improvements were observed in the following benchmarks:
\texttt{Chi\_11888}: 3,837 instructions reduced to 3 (99.92\%);
\texttt{Chi\_} \texttt{11336}: 2,283 reduced to 3 (99.87\%);
\texttt{Chi\_1203}: 17,149 reduced to 30 (99.82\%); and
\texttt{Chi\_11887}: 3,069 reduced to 6 (99.80\%).

\begin{figure}[ht]
\includegraphics[width=1\columnwidth]{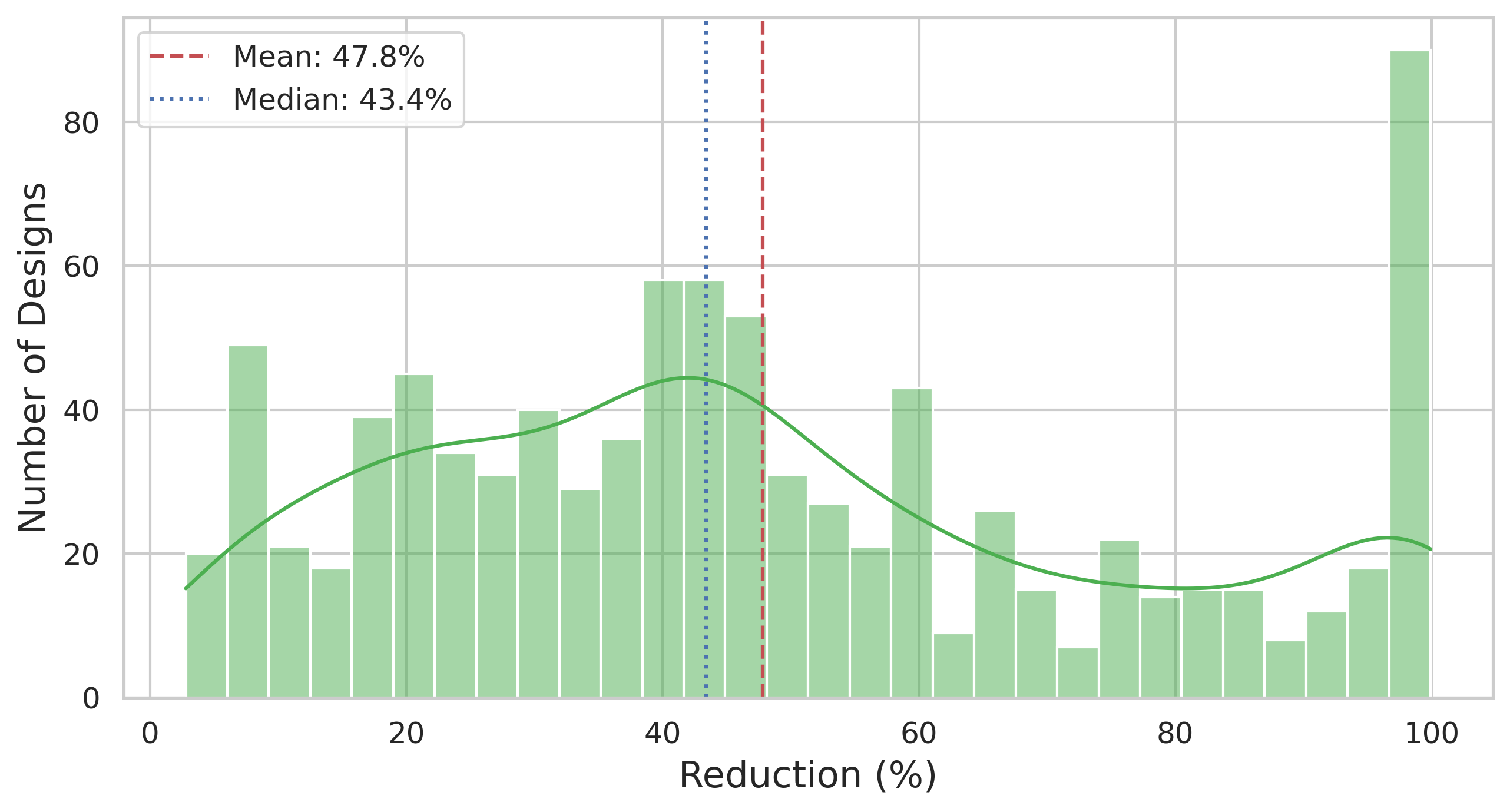}
\caption{Histogram of instruction-count reductions across designs that improved. Dashed and dotted lines indicate the mean (47.8\%) and median (43.4\%) reductions, respectively.}
\Description{Histogram of instruction-count reductions across designs that improved. Dashed and dotted lines indicate the mean (47.8\%) and median (43.4\%) reductions, respectively.}
\label{fig_hist_reduction_pct_improved}
\end{figure}

Not every application of vectorization leads to a smaller design, because maintaining semantic equivalence might require the insertion of intermediate wires.
We analyzed the eight cases where vectorization led to code-size increase and identified a single recurring design pattern. All instances of code-size expansion correspond to this same pattern.
These modules implement multiplexers that select 8-bit slices from a 64-bit input bus using a strided (non-contiguous) access pattern. In the original Verilog, these accesses are expressed via direct bit indexing (e.g., \texttt{data\_i[7]}, \texttt{data\_i[15]}), which maps to inexpensive wire references in the HW dialect. However, when vectorizing these designs, the vectorizer must explicitly materialize the input vector by inserting multiple ``\texttt{comb.extract}'' and ``\texttt{comb.concat}'' operations to gather the non-contiguous bits.
In this scenario, the overhead of preparing vector operands outweighs the benefits of grouping the multiplexer logic into vector operations, resulting in an increased instruction count.
A way to avoid this issue would be to incorporate a cost model into the vectorizer to account for data preparation overhead. For example, the vectorizer could avoid vectorization when operand accesses are non-contiguous and would require many extract/concat operations. However, since this pattern is rare, we chose not to add this complexity to the current implementation.

\subsection{RQ2: Compilation Time}
\label{sub_rq2}

This section investigates the running time of the vectorizer and its asymptotic behavior as a function of design size, to demonstrate that the proposed transformation is sufficiently practical to be used in production.

\noindent
\textbf{Discussion:}
Figure~\ref{fig_time_vs_instructions_loglog} shows the asymptotic behavior of the vectorization pass by plotting its runtime (in seconds) as a function of the number of vectorized \texttt{hw} instructions on a log-log scale. The results show a linear trend, with a coefficient of determination $R^2 = 0.982$, confirming that runtime grows proportionally to design size.
Vectorization process time is insignificant compared to the heavy computation steps, such as formal verification or synthesis.

\begin{figure}[ht]
\includegraphics[width=1\columnwidth]{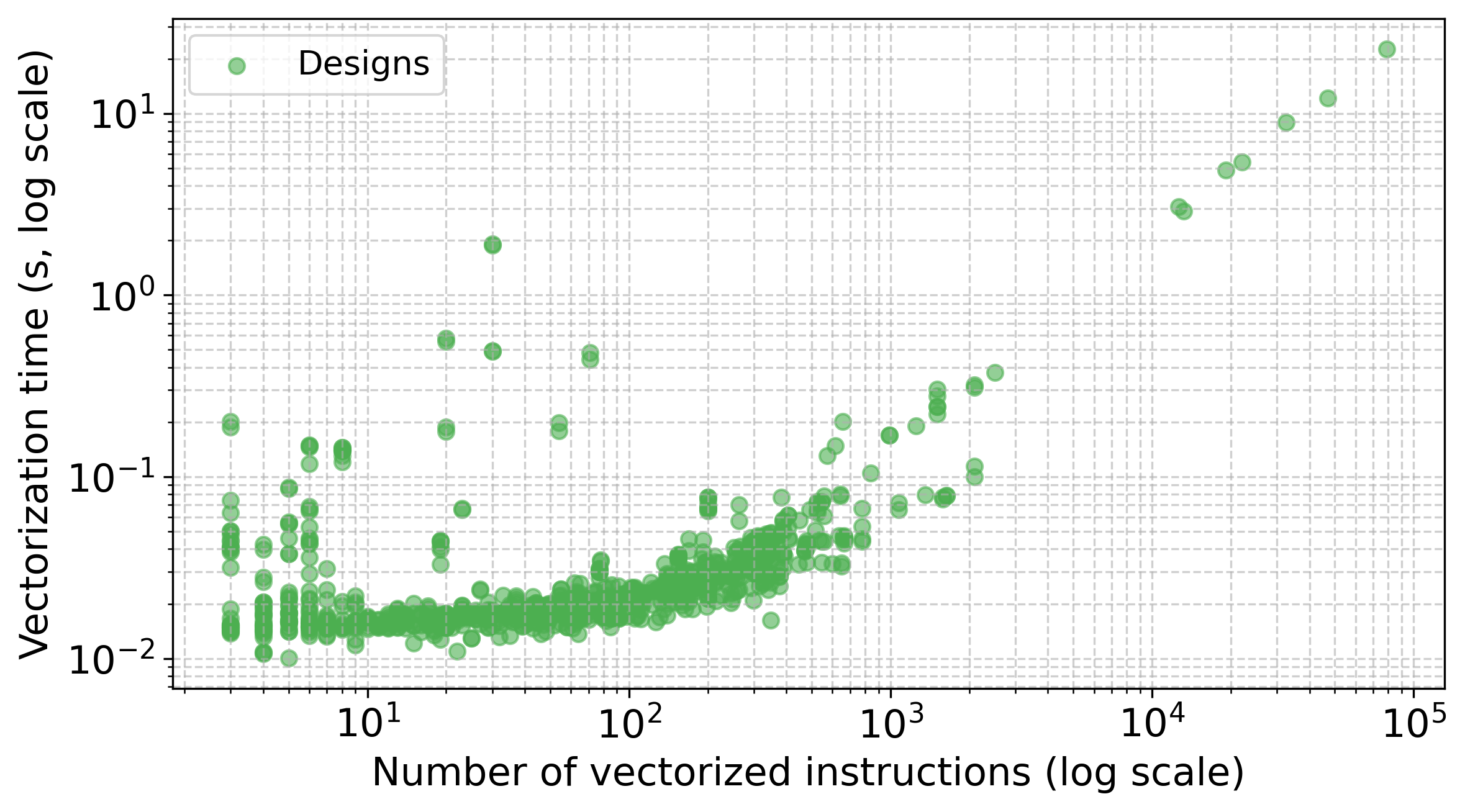}
\caption{Asymptotic behavior of the vectorization pass in a log-log scale.
The data points form an approximately straight line with slope $\approx 1$, indicating linear time complexity.}
\Description{Asymptotic behavior of the vectorization pass in a log-log scale.
The data points form an approximately straight line with slope $\approx 1$, indicating linear time complexity.}
\label{fig_time_vs_instructions_loglog}
\end{figure}

\subsection{RQ3: Impact on Jasper}
\label{sub_rq3}

Jasper is a formal verification platform that provides a suite of applications for C/C++ and RTL-level analysis, enabling designers to identify errors early in the development cycle. Within this framework, the RTL elaboration component is responsible for translating RTL descriptions into their corresponding netlist representations, consisting of nets, gates, and structural connectivity.
This section evaluates how vectorization speedups processes such as Jasper's elaboration.

\noindent
\textbf{Discussion:}
Figure~\ref{fig_jasper_mem_box} shows the impact of vectorization on Jasper's memory usage in the HDL elaboration step across the 1,157 benchmark designs. While the Original designs exhibit a wide distribution with heavy-tailed outliers, the vectorized versions are tightly clustered around a much lower median memory footprint. 

\begin{figure}[ht]
\includegraphics[width=1\columnwidth]{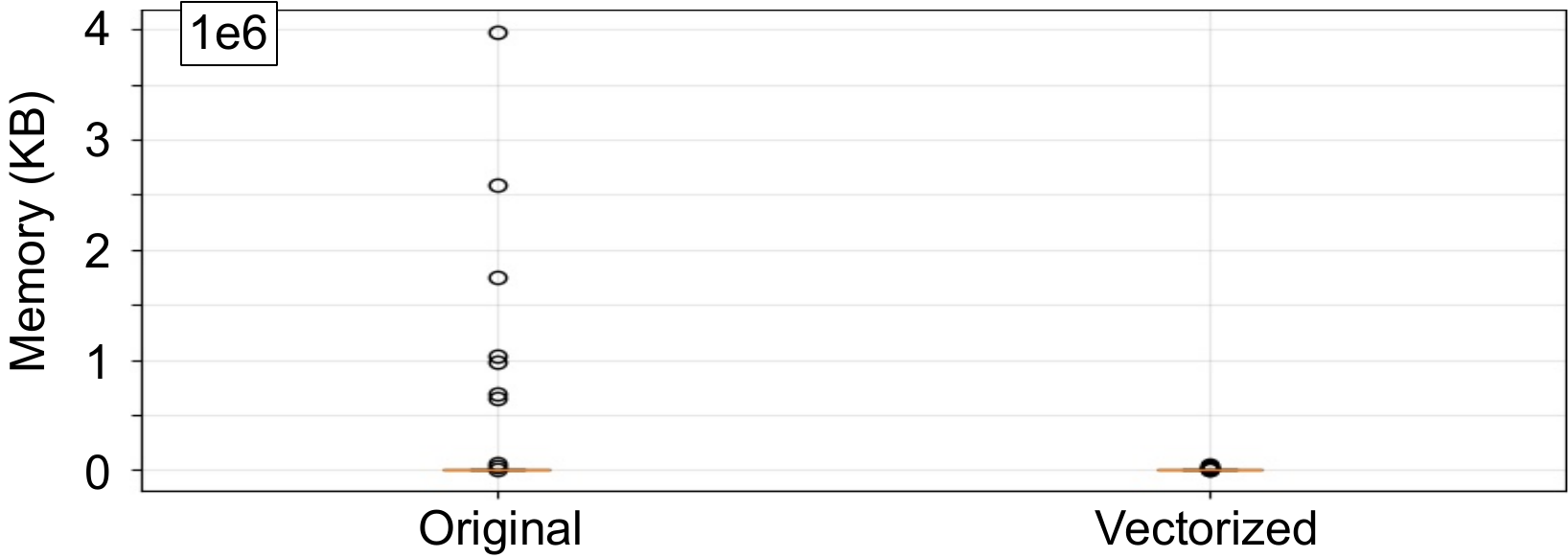}
\caption{Distribution of Jasper's elaboration memory (KB) for 1,157 original and vectorized designs.}
\Description{Distribution of Jasper's elaboration memory (KB) for 1,157 original and vectorized designs.}
\label{fig_jasper_mem_box}
\end{figure}

A similar pattern is observed for HDL elaboration runtime, as shown in Figure~\ref{fig_jasper_time_box}. It demonstrates that vectorization eliminates the high-runtime outliers present in the original designs. On average, vectorization reduced elaboration’s memory consumption by 51.30\% and runtime by 28.12\%.

\begin{figure}[ht]
\includegraphics[width=1\columnwidth]{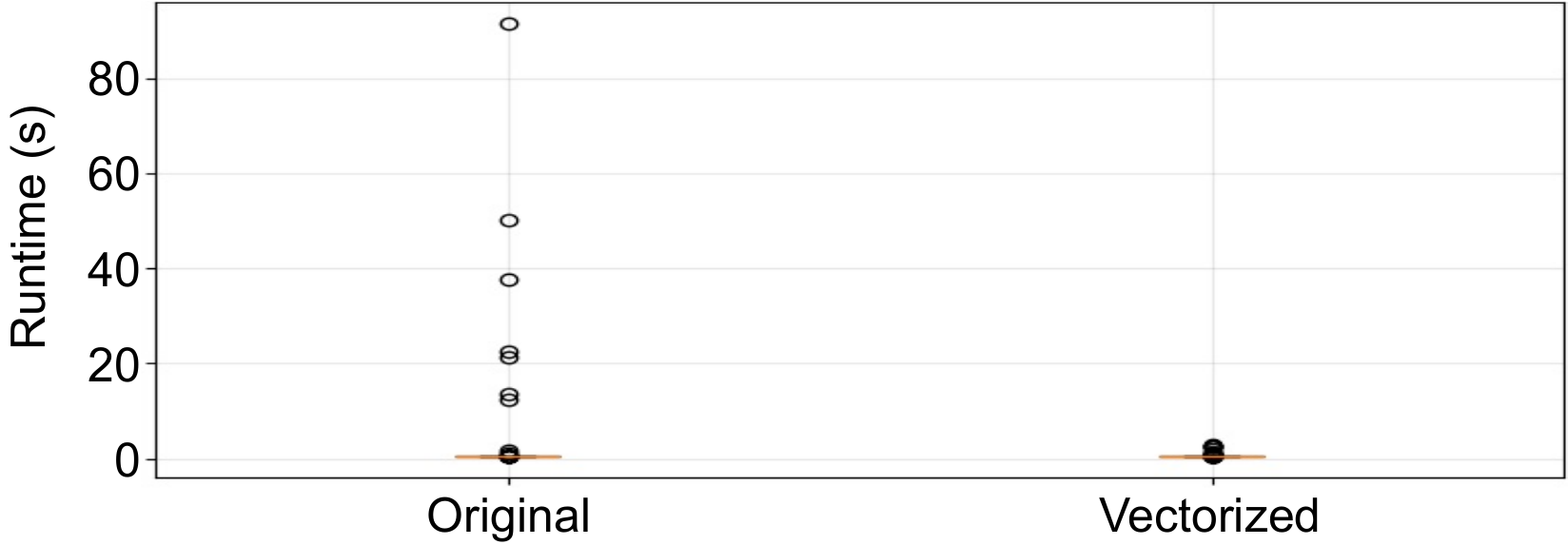}
\caption{Distribution of Jasper HDL elaboration runtime (s) for the 1,157 original and vectorized designs.}
\Description{Distribution of Jasper HDL elaboration runtime (s) for the 1,157 original and vectorized designs.}
\label{fig_jasper_time_box}
\end{figure}

\subsection{RQ4: Impact on Genus}
\label{sub_rq4}

The Cadence Genus Synthesis Solution is a physically aware RTL-to-gate synthesis tool that optimizes digital designs for power, performance, and area while ensuring timing closure. It processes a Verilog design through a two-stage pipeline.
In the first stage, elaboration, Genus parses the RTL code, resolves design hierarchy and constructs, and translates the high-level description into a gate-level netlist.
In the second state, synthesis, it performs physically aware optimizations, timing closure, and Design-for-Test (DFT) insertion to generate a gate-level netlist optimized for power, performance, and area.
Vectorization improves both phases. Memory is optimized because vectorized assignments are represented as single objects rather than as large sequences of bit-level operations. Reducing the number of assignments also accelerates processing. For example, when optimizing data paths, Genus can operate on entire vectors instead of individual bits. This section evaluates the impact of these improvements.

\noindent
\textbf{Discussion:}
After vectorization, we observe a geometric-mean improvement in elaboration time by 5.49\%.
Although general improvement is modest, individual benchmarks reveal large speedups. Figure~\ref{fig_improvements_genus} shows the 20 largest improvements in Genus memory consumption, elaboration time, and synthesis time. In one ChiBench design (from \url{https://github.com/awai54st/LUTNet}), memory consumption dropped by 75\%. In another design (from \url{https://github.com/ehw-fit/evoapproxlib}), synthesis time improved by 52\%. Improvements in elaboration time are more consistent across benchmarks: the 20 best improvements range from 45\% to 85\%, with a geometric mean of 64.5\%.

\begin{figure}[ht]
\includegraphics[width=1\columnwidth]{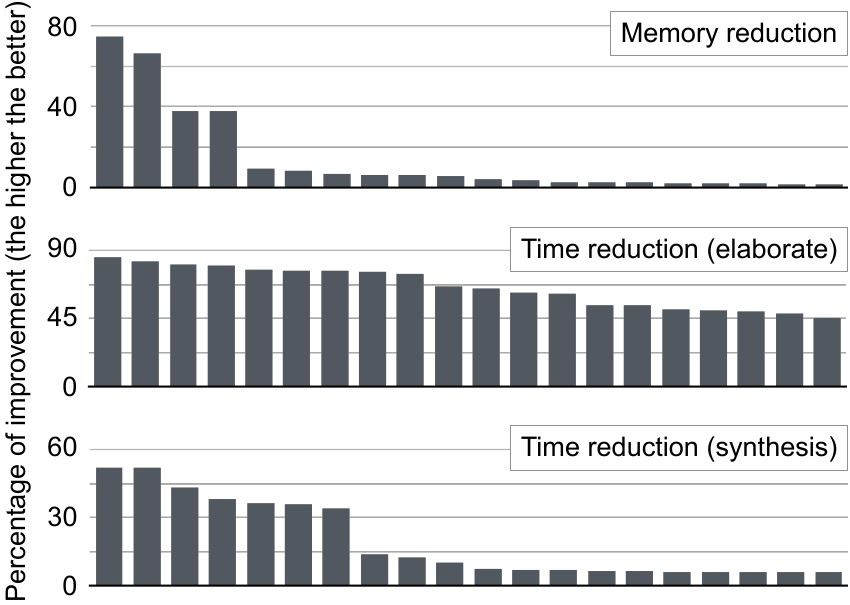}
\caption{The 20 largest observed improvements across three phases of the Genus pipeline. Each bar refers to a different benchmark. Benchmarks vary across the figures.}
\Description{The 20 largest observed improvements across three phases of the Genus pipeline. Each bar refers to a different benchmark. Benchmarks vary across the figures.}
\label{fig_improvements_genus}
\end{figure}

\subsection{RQ5: Impact of Individual Analyses}
\label{sub_rq5}

Each technique discussed in Section~\ref{sec_impl} contributes differently to the overall effectiveness of vectorization.
This section quantifies their individual impact.

\noindent
\textbf{Discussion.}
Figure~\ref{fig_separatedResults} breaks down the number of vectorized patterns according to the technique that enabled their recognition.
The majority of successful cases require a combination of bit-level and structural reasoning, following the partial vectorization strategy of Section~\ref{sub_vec}.
Selective inlining is also beneficial: while applicable to only 289 designs, it increased the number of vectorized patterns within these designs from 999 to 1,075.

\begin{figure}[ht]
\centering
\includegraphics[width=1\columnwidth]{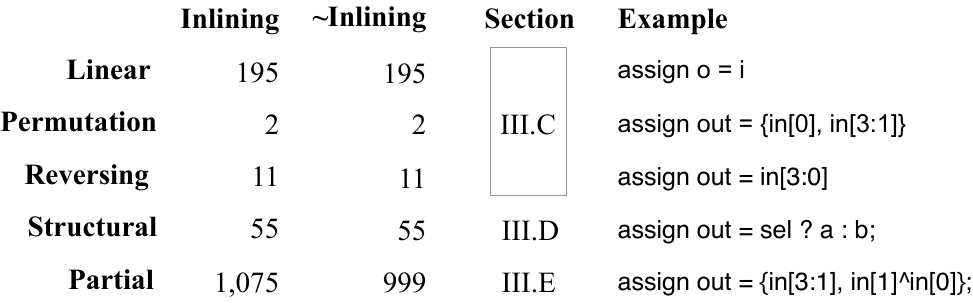}
\caption{Impact of the analyses in Section~\ref{sec_impl} on the number of vectorized patterns. Inlining was applicable in 289 out of 1,157 designs.}
\Description{Impact of the analyses in Section~\ref{sec_impl} on the number of vectorized patterns. Inlining was applicable in 289 out of 1,157 designs.}
\label{fig_separatedResults}
\end{figure}

Purely structural vectorization was comparatively rare, with 55 patterns across all 1,157 vectorized designs.
Patterns identified solely through the bit-level dataflow analysis were more common: we observed 208 such cases, 195 of which correspond to simple linear assignments where input and output indices align directly.
Overall, Figure~\ref{fig_separatedResults} highlights that effective vectorization typically requires multiple complementary analyses, as mixed-mode strategies account for the vast majority of recognized patterns.

\subsection{RQ6: Inlining and Vectorization}
\label{sub_rq6}

Section~\ref{sub_inlining} discussed selective inlining: a methodology that inlines modules to enable more vectorization opportunities.
That section mentioned that the current implementation of the vectorizer adopts an inlining threshold of 150 instructions; hence, only allowing code integration in as much as the size of the resuling module remains under 150 CIRCT instructions.
To settle for this number, we have evaluated the vectorizer using different inlining thresholds (30, 75, 150, 200, 300, and 400).
This section discusses this experiment.

\noindent
\textbf{Discussion.}
Figure~\ref{fig_analysis_dual_y} shows how the number of vectorization patterns varies with the inlining threshold. This count refers to individual vectorization opportunities, not to the number of files (a single design often contains multiple opportunities). Larger inlining thresholds lead to more vectorization opportunities; however, they also produce larger designs. The region of diminishing gains lies between 150 and 200 instructions. When increasing the threshold from 150 to 200 instructions, we observe eight new vectorization opportunities, but only one additional opportunity when moving the threshold from 200 to 300 instructions.

\begin{figure}[ht]
\centering
\includegraphics[width=1\columnwidth]{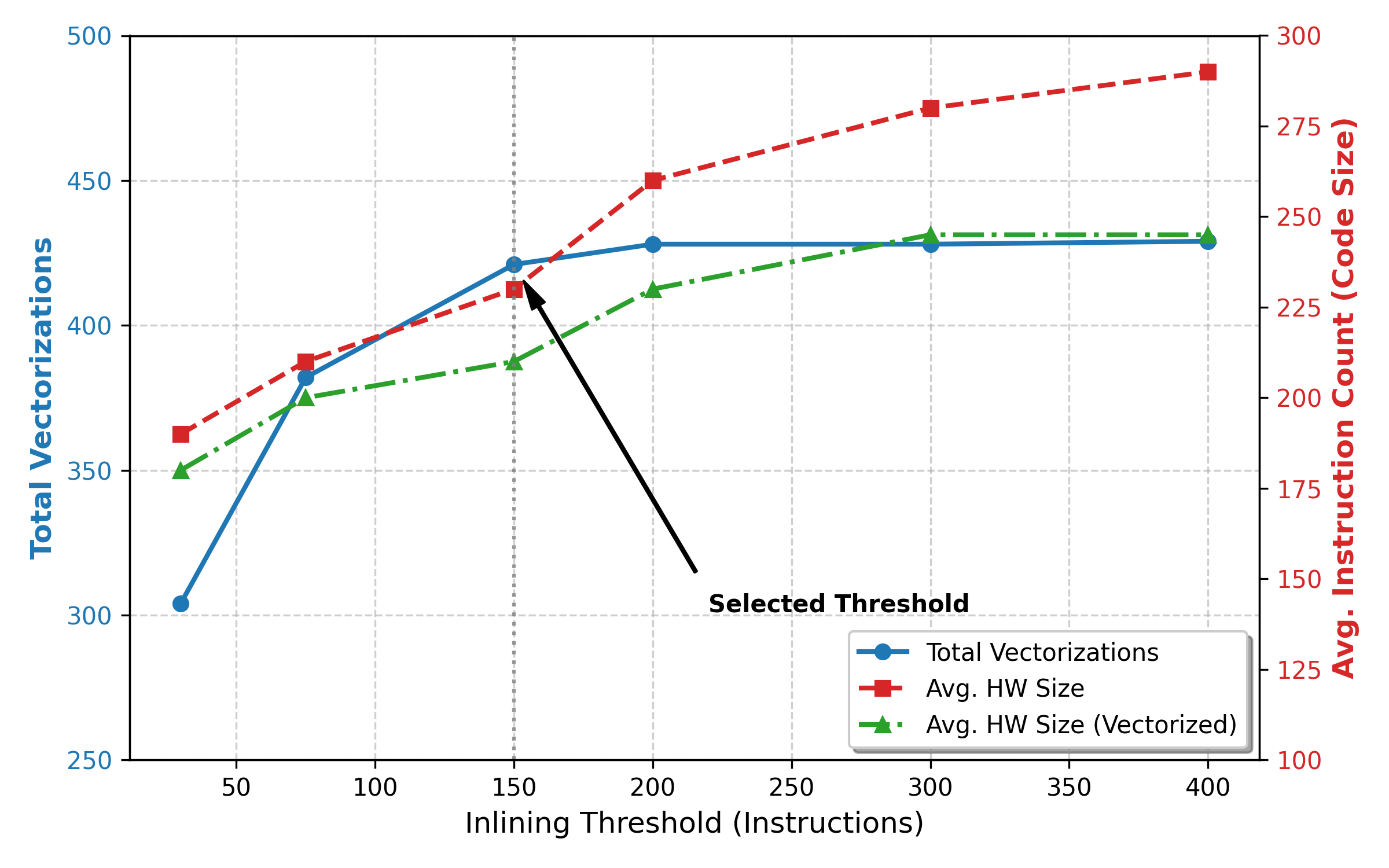}
\caption{Impact of the inlining threshold discussed in Section~\ref{sub_inlining} on the number of vectorization opportunities.}
\Description{Impact of the inlining threshold discussed in Section~\ref{sub_inlining} on the number of vectorization opportunities.}
\label{fig_analysis_dual_y}
\end{figure}

Notice that larger designs do not necessarily reduce the performance of EDA tools applied to them. As an example, although a lower threshold (e.g., 30) results in smaller absolute code sizes, it limits vectorization too much. Consequently, when we observe the performance of Jasper on the 31 designs that are vectorizable at both the 30 and 150 thresholds, we obtain better results using the latter. When the inlining threshold is 150 instructions, despite the code expansion caused by inlining, we notice that average memory consumption decreases (9,224.4 KB vs. 9,260.2 KB) and verification time improves (0.470 s vs. 0.474 s). Indeed, this threshold of 150 instructions yields the best overall performance for Jasper in our benchmarks.

\section{Related Work}
\label{sec_rw}

As we have discussed in Sections~\ref{sec_intro} and~\ref{sec_ovf}, the (spatial) vectorization of Verilog introduced in this paper is conceptually different from the (temporal) vectorization implemented by compilers such as \texttt{clang} and \texttt{gcc}.
Furthermore, to the best of our knowledge, no open-source RTL toolchain, including Yosys and CIRCT, performs vectorization as a semantics-preserving RTL-to-RTL transformation.
This type of transformation is also absent in HLS compilers such as LegUp, Bambu, and Xilinx Vivado.
Jasper, in contrast, includes vectorization in its most recent version (\texttt{2025.09FCS}), which we have tried. However, this feature does not run on top modules and currently does not include general bit permutation vectorization. Consequently, it did not show any effect in the benchmarks used in this paper. 
Nevertheless, our work connects to several research areas, including word-level reasoning for bit-vectors, recovery of word-level structure from bit-level/netlist representations, and algebraic rewriting techniques for structural circuit simplification.

\paragraph{Word-level reasoning and bit-vector solvers.}
Reasoning at the word (bit-vector) level can be far more efficient than bit-blasting in many verification tasks, leading to a long line of solvers and interpolation techniques that preserve word-level operators~\cite{Wang19,Griggio11}.
These works optimize the internal representations used by SMT and equivalence checking engines, whereas our method transforms Verilog source to \emph{expose} word-level structure before those engines operate.

\paragraph{Recovering word-level datapaths from bit-level nets.}
Several efforts address the inverse problem: given a low-level Boolean network (e.g., LUT or AIG netlist), reconstruct high-level words, carry chains, and datapath operators~\cite{Li13,Narayanan23}.
These ``word recovery'' and reverse-engineering techniques rely on pattern matching and structural heuristics to detect specific arithmetic or multiplexer patterns.
While conceptually related to the recognition component of our vectorizer, they are typically less general, operate after bit-blasting, and do not provide a semantics-preserving RTL-to-RTL rewrite.

\paragraph{Algebraic rewriting, AIG/FRAIG techniques, and structural compression.}
There is extensive work on algebraic rewriting and functional reduction using AIG-based representations~\cite{Kaufmann19,Yu18}.
Techniques such as cut rewriting, FRAIGs, and structural hashing reduce circuit size by optimizing boolean structure.
These approaches are complementary to ours: while AIG rewriting simplifies logic after lowering to bit-level form, our vectorizer restores and encodes word-level behavior early in the compilation flow, enabling more scalable reasoning by verification and synthesis tools.

\section{Conclusion}
\label{sec_conc}

This paper introduced a compiler transformation that performs vectorization of Verilog designs.
Although vectorization does not alter the synthesized hardware, it reduces the symbolic complexity of the underlying specification.
Our experimental results demonstrate that this reduction improves processes such as verification performance or synthesis optimization, as observed using the Jasper\textsuperscript{\textregistered} Formal Verification Platform and the Genus\textsuperscript{\texttrademark} Synthesis Solution.
Furthermore, because the technique operates as a source-to-source transformation, it also enhances the readability and maintainability of Verilog designs, providing practical benefits beyond tool performance.
Due to these benefits, the techniques proposed in this paper are being released as open-source extensions of the CIRCT framework.

\section*{Data Availability}
The vectorization algorithms discussed in this paper are publicly available as part of the CIRCT standard distribution.
To easy reproducibility, a standalone version of this implementation is publicly available at \url{https://github.com/lac-dcc/manticore/}, under the GPL 3.0 License.

\section*{Acknowledgment}
This project was supported by a grant from Cadence Design Systems.
The authors also acknowledge the support of FAPEMIG (Grant APQ-00440-23), CNPq (Grant \#444127/2024-0), and CAPES (\textsc{PrInt}).
We are grateful to the CIRCT community for incorporating this implementation into the main codebase.
We especially thank Hideto Ueno for carefully reviewing all the patches we submitted, and Fabian Schuiki, for helping us integrate the \texttt{comb.reverse} operation (necessary for implementing bit permutations) to CIRCT's \texttt{comb} dialect.

\bibliographystyle{plain}
\bibliography{references}

\end{document}